\definecolor{DarkGreen}{rgb}{0.1,0.5,0.1}
\definecolor{DarkRed}{rgb}{0.5,0.1,0.1}
\definecolor{DarkBlue}{rgb}{0.1,0.1,0.5}
\newcommand{\cC}{\ensuremath{\mathcal{C}}}
\newcommand{\F}{{\mathbb F}}
\newcommand{\inset}[1]{\left\{#1\right\}}
\newcommand{\inparen}[1]{\left(#1\right)}
\newcommand{\eps}{\varepsilon}
\renewcommand{\epsilon}{\varepsilon}
\newtheorem{theorem}{Theorem} 
\newtheorem{lemma}[theorem]{Lemma} 
\newtheorem{definition}{Definition}
\newtheorem{claim}[theorem]{Claim}
\newtheorem{question}{Question}
\title{On taking advantage of multiple requests \\ in error correcting codes\thanks{This work is partially supported by NSF Grant CCF-1657049. }}
\author{

\IEEEauthorblockN{Prasanna Ramakrishnan and Mary Wootters}
\IEEEauthorblockA{
\textit{Stanford University}\\
Stanford, CA \\
\{pras1712,marykw\}@stanford.edu}

}
\newcommand{\codename}{robust batch code }
\newcommand{\codenamens}{robust batch code} % this one doesn't have a space after it.
\newcommand{\Codenamens}{Robust batch code}
\begin{document}
\maketitle

\begin{abstract}
In most notions of locality in error correcting codes---notably locally recoverable codes (LRCs) and locally decodable codes (LDCs)---a decoder seeks to learn a single symbol of a message while looking at only a few symbols of the corresponding codeword.  However, suppose that one wants to recover $r > 1$ symbols of the message.  The two extremes are repeating the single-query algorithm $r$ times (this is the intuition behind LRCs with availability, primitive multiset batch codes, and PIR codes) or simply running a global decoding algorithm to recover the whole thing.  In this paper, we investigate what can happen in between these two extremes: at what value of $r$ does repetition stop being a good idea?

In order to begin to study this question we introduce \em \codenamens{s}, \em which seek to find $r$ symbols of the message using $m$ queries to the codeword, in the presence of erasures.  We focus on the case where $r=m$, which can be seen as a generalization of the MDS property.  Surprisingly, we show that for this notion of locality, repetition is optimal even up to very large values of $r = \Omega(k)$.
\end{abstract}

\section{Introduction}
In a traditional coding theory setup, a message $x \in \F^k$ is encoded as a codeword $c \in \F^n$, for $n \geq k$.  The goal is to be able to recover $x$ given a corrupted version of $c$.  An important notion in coding theory is \em locality, \em which arises in applications ranging from complexity theory to distributed storage.  
Locality refers to the ability to recover a single symbol $x_i$ of the message (we call this a \em request\em) without looking at too many codeword symbols (these are \em queries\em).
There are several ways of setting up the problem: in \em locally decodable codes \em (LDCs), one wants to recover $x_i$ despite a constant fraction of errors in $c$.  In the setting of \em locally recoverable codes \em (LRCs), one typically wants to recover $x_i$ despite a small number of erasures in $c$.  

In this paper, we focus on locality in the setting where we request not just one symbol, but $1 < r < k$ symbols.
There are notions---like \em batch codes, \em or \em LRCs with availability\em---that focus on this.  However, for these notions, the primary way of obtaining codes which can recover $r$ symbols is by designing codes which allow for the repeated parallel recovery of a single symbol.  For small values of $r$, this is a reasonable thing to do, and it seems unlikely that one can do better.  However, for larger values of $r$, one can clearly do better: as an extreme example, when $r = k$, one can simply (globally) decode the whole codeword and this would be better then locally decoding each coordinate separately.

The question which we ask in this work is when this shift occurs.  That is, 

\begin{question}\label{q:main}
How large does $r$ have to be before repeating the single-request protocol $r$ times is substantially sub-optimal?
\end{question}

Answering Question~\ref{q:main} seems very difficult for LDCs, where we do not even understand the single-request case.  Instead, we focus on the model where some number $d$ of codeword symbols have been adversarially erased.  To this end, we define a new variant of an LRC, which we call a \em \codenamens. \em 
\Codenamens{s} are error correcting codes which have the guarantee that any $r$ message symbols can be recovered by looking at $m$ symbols of the codeword, even in the presence of $d$ erasures.  The goal is the obtain the smallest block length $n$, given $r,m,d$ and the length of the message $k$. 

Even restricted to the erasure setting, this question still seems quite difficult.  In this work, we restrict our attention further to the setting where $r = m$.  That is, we would like to recover $r$ message symbols by reading exactly $r$ codeword symbols.  This setting provides a very nice testing ground to begin to explore Question~\ref{q:main}, because the extreme cases have clear answers:
\begin{itemize}
	\item When $r=m=1$, we are only allowed one query and clearly we cannot do any better than the repetition code which repeats the message $d+1$ times.   In this case, we have $n = k(d + 1)$.
	\item When $r=k$, then we can use an MDS code, for example a Reed-Solomon code, to encode $x$.  In this case, we achieve $n = k + d$, which is optimal by the Singleton bound.
\end{itemize}

Thus, Question~\ref{q:main} becomes: at what $r$ does $n = k(d+1)$ stop being optimal?  
Since there is a much better solution of $n = k + d$ when $r = k$, we might expect to be able to do at least slightly better than repetition for values of $r$ that are, say, $\sqrt{k}$ or even $k/10$.  However, our main result (formally stated in Theorem~\ref{thm:main}) is that this is not the case.  In natural parameter regimes, even when $r = \Omega(k)$, the repetition code is optimal!  

The natural question left by our work is what happens when we allow $m > r$?  For example, what if $m$ may be as large as $10r$?  It seems possible that the techniques we develop here may be helpful in proving impossibility results in this setting as well.  On the other hand, a construction of such codes with $m > r$ would be a very useful building block in the design of codes for distributed storage, private information retrieval, and other areas.

\textbf{Outline.} In Section~\ref{sec:defs} we lay out notation and preliminary definitions.  We survey related work in Section~\ref{sec:relatedwork}.  In Section~\ref{sec:main}, we formally state our main result (Theorem~\ref{thm:main}) and discuss its implications.  We prove Theorem~\ref{thm:main} in Section~\ref{sec:proof}. 

\section{\Codenamens{s}}\label{sec:defs}

First, we lay out some notation.
We use $[n]$ to denote the set $\{1,\ldots,n\}$, and $\F$ to denote an arbitrary finite field.  Our results hold over any finite field, but for our motivation we imagine that $\F$ is large enough that MDS codes exist.  For a vector $v \in \F^n$ and a set $I \subset[n]$, we use $v|_I \in \F^{|I|}$ to denote the restriction of $v$ to the indices $i \in I$.  Similarly, for a matrix $G \in \F^{k \times n}$, and subsets $I \subset[k], j \subseteq[n]$, we use $G|_{I,J}$ to denote the matrix $G$ restricted to the rows indexed by $i \in I$ and the columns indexed by $j \in J$.
Throughout, we will consider \em linear codes. \em  A linear code can be represented by its \em generator matrix \em $G \in \F^{k \times n}$: the encoding $\cC(x)$ of the message $x \in \F^k$ is given by $xG \in \F^n$.  The \em rate \em of a code $\cC: \F^k \to \F^n$ is defined to be the ratio $k/n$.

\begin{definition}\label{def:main}
A $(r,m,d)$-\codename with block length $n$ and message length $k$ over a field $\F$ is a linear map $\cC:\F^k \to \F^n$, so that the following holds.  For any $x \in \F^k$, and for any set $I \subseteq [k]$ of size $r$ and any set $D \subseteq [n]$ of size $d$, there is some set $J \subseteq [n]$ of size at most $m$ so that $D \cap J = \emptyset$, and so that $\cC(x)|_{J}$ determines $x|_I$.
\end{definition}

Notice that a $(r,m,d)$-\codename has distance at least $d$, because every coordinate $x_i$ of the message can be recovered from any $d$ erasures, hence the whole codeword can be recovered from any $d$ erasures.

In this paper we will focus the special case where $m = r$.  There are two reasons for doing this.  First, it it an interesting notion in and of itself: when $m = r = k$, Definition~\ref{def:main} is precisely the definition of an Maximum Distance Separable (MDS) code.  
That is, one way of defining an MDS code is to say that any $k$ symbols of the message should be obtainable from some $k$ symbols of the codeword, despite $d$ adversarial erasures, while $n$ is as small as possible given $k$ and $d$ (in this case, $n = k + d$).  In a $(r,r,d)$-\codenamens, we would like for any $r$ symbols of the message to be obtainable from some $r$ symbols of the codeword, despite $d$ adversarial erasures, while $n$ is as small as possible given $k$ and $d$.  In this sense, and $(r,r,d)$-\codename is perhaps a ``local" sort of MDS requirement.

The second reason we consider the case where $m = r$ is that it provides a clean setting in which to approach Question~\ref{q:main}; as discussed in the introduction, the extreme cases of $r=1$ and $r=k$ are well-understood and quite different, so it is natural to ask where the transition occurs.

\section{Related Work}\label{sec:relatedwork}
There are many notions of ``batched" queries in the coding theory literature.  In this section, we give a brief overview and explain why these notions don't answer Question~\ref{q:main}. 

\paragraph{LRCs with availability and related notions}
Locally recoverable codes (LRCs) and similar notions have been heavily studied over the past five years or so~\cite{GHSY12,HCL13,PD14,TB14}, motivated by applications in distributed storage.  In the LRC model, the goal is to design a code with good distance, and so that each message symbol $x_i$ (or more generally, each codeword symbol $c_i$) is a function of at most $m$ symbols in the codeword.
Thus, our work could be seen as tackling the problem of designing LRCs robust to $r$ failures.

Handling multiple failures in LRCs is also well-studied; the primary approach is via LRCs with \em multiple repair sets \em or \em availability\em~\cite{PHO13,WZ14,TB14,RPDV16}.  
In addition to handling multiple failures, LRCs with availability support multiple parallel requests for the same message symbol.  
More precisely, they have the additional requirement that each symbol $x_i$ has $t$ disjoint repair groups; that is, $t$ disjoint sets of codeword symbols that determine $x_i$.  It is easy to see that if $t=d+1$ then this implies \codenamens{s}:
indeed, even after $d$ erasures, each symbol has at least one of its $d+1$ repair sets intact, so we may recover all $r$ requests using at most $m = r \cdot m'$ codeword symbols, where $m'$ is the locality of the LRC.
However, this notion of $t$ disjoint repair groups is much stronger than the notion of a \codenamens, and the reduction above seems wasteful: it boils down to repeating a single request $r$ times.  Question~\ref{q:main} asks when we can do better.

There are many notions related to LRCs with availability.  In particular, PIR codes~\cite{codedpir}, Primitive Multiset Batch Codes~\cite{IKOS04,DGRS14}, and $t$-DRGP codes~\cite{FGW17} all work on basically the same principle of disjoint repair groups.
We refer the reader to~\cite{Ska16} for an excellent survey on LRCs, Batch Codes, and PIR codes.
As above, all of these codes do give rise to \codenamens{s}, but the reduction seems wasteful, essentially repeating the procedure for a single request $r$ times.  Rather than trying to construct many disjoint repair groups, a \codename should take advantage of overlapping repair groups between different symbols.

\paragraph{Batch codes}  
Batch codes, introduced in \cite{IKOS04}, are similar in spirit to our work: the original goal was to do Private Information Retrieval (PIR) for $r$ queries in a more efficient way than by repeating one query $r$ times.  In a batch code, the codeword symbols are divided into \em buckets, \em and the goal is to answer $r$ requests using at most $t$ queries from each bucket.  
However, to the best of our knowledge, the best constructions of batch codes go through \em primitive multiset batch codes, \em discussed above.  These essentially enforce the ``$t$-disjoint-repair-groups" requirement, and boil down to repeating the algorithm to handle a single request $r$ times.

\paragraph{Regenerating codes}
\Codenamens{s} can be seen a strengthening of regenerating codes.  Regenerating codes (see \cite{survey} for a survey) are codes which aim to minimize \em network bandwidth \em rather than locality.  In a systematic repair-by-transfer regenerating code, a \em node \em containing $r$ message symbols fails, and the goal is to recover it by looking at very few symbols in the rest of the code, appropriately spread out over surviving nodes.  Thus, such regenerating codes are \codenamens{s}, where the set $D$ of deletions is equal to the set $I$ that must be recovered, and is restricted to very special sets (the symbols contained within single nodes).

\paragraph{Locally decodable codes}
In the LDC setting (see \cite{Y11} for a survey) we seek to recover a single message symbol in the presence of errors, rather than erasures. Note that since the errors are adversarial and undetectable, the query algorithm of an LDC is necessarily randomized. 
There are constructions of high-rate LDCs~\cite{KSY10,GKS13,HOW13,KMRS16,HWZ17} with query complexity $n^\eps$ or even sub-polynomial in $n$.  
In fact, an LDC with query complexity $m$ is easily seen to have $t = \Omega(n/m)$ disjoint repair groups for every symbol (see, e.g., \cite{FGW17}), and thus high-rate LDCs also give rise to \codenamens{s}.
However, as above, the method for dealing with multiple failures here is just repeating a single request $r$ times.
%(In particular, if $r > n^{1 - \eps}$, (where $\eps$ is a small constant or slightly sub-constant as in \cite{KMRS16,HWZ17}), these methods become worse.) 

In this work, we prove lower bounds (impossibility results) for \codenamens{s}, and it is natural to ask if our results have any implications for lower bounds in LDCs, a notoriously difficult problem.  Unfortunately, they do not.  The results we show in this paper are for the case where $r = m$.  In the LDC setting, this corresponds to a lower bound on $1$-query LDCs, which is not very interesting.  
Moreover, even with $m > r$, any interesting lower bounds for \codenamens{s} with $r=1$ would not translate into interesting lower bounds for LDCs because it seems unlikely that handling erasures is as hard as handling errors; and any results for $r > 1$ would be less relevant.

The work perhaps closest in spirit to ours (that we are aware of) is in the LDC setting: a recent work \cite{CXY16} on multi-point local decoding of Reed-Muller codes.
The punchline is that, in a particular parameter regime, Reed-Muller codes can recover $r$ requests (in the presence of errors) more effectively than by repeating the one-query procedure $r$ times.  This work differs from ours in that it focuses on Reed-Muller codes in particular (rather than general bounds), and additionally focuses on errors rather than erasures.

\section{Repetition is optimal most of the time}\label{sec:main}

In this section, we state our main result and discuss the implications in a few different parameter regimes. % As before, we consider a linear code $\cC:\F^k \to \F^n$ given by a generator matrix $G \in \F^{k \times n}$, so that for $x \in \F^k$, $\cC(x) = xG$.

\begin{theorem}\label{thm:main}
Let $\cC:\F^k \to \F^n$ be a $(r,r,d)$-\codenamens.  Then
\[ n \geq k(d+1) - \max \inset{0, d(r-1) - \frac{1}{2}(k-r)^2 }. \]
\end{theorem}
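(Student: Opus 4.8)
The plan is to translate the combinatorial recovery guarantee into a statement about the columns of the generator matrix $G \in \F^{k\times n}$ of $\cC$, and then to aggregate one inequality per request set $I$ into a global bound on $n$. For the translation: since $\cC$ is linear, ``$\cC(x)|_J$ determines $x|_I$'' is equivalent to $\inset{e_i : i \in I} \subseteq \spn(\text{columns of } G \text{ indexed by } J)$. Because $|J| \le m = r = |I|$, this forces $|J| = r$ and $\spn(G|_{\cdot,J}) = \spn\inset{e_i : i \in I}$; in particular every queried column has support contained in $I$. Writing $\mathrm{Col}(I) = \inset{ j : \supp(G_{\cdot,j}) \subseteq I }$ (we may assume no zero columns, since deleting them only decreases $n$), robustness says exactly that for every $d$-set $D$ the columns of $\mathrm{Col}(I)\setminus D$ still span $\F^I$. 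Equivalently, $G|_{I,\mathrm{Col}(I)}$ generates a length-$|\mathrm{Col}(I)|$ code of dimension $r$ and minimum distance at least $d+1$, so the Singleton bound gives the local inequality $|\mathrm{Col}(I)| \ge r + d$ for every size-$r$ set $I$.

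The difficulty is that combining these local inequalities by pure counting is far too weak: summing $|\mathrm{Col}(I)| \ge r+d$ over all $\binom{k}{r}$ sets and using $\sum_I |\mathrm{Col}(I)| \le n\binom{k-1}{r-1}$ yields only $n \ge k(1 + d/r)$, which for intermediate $r$ is much smaller than the repetition value $k(d+1)$. So the aggregation must exploit linear independence, not just cardinalities. I would organize this as a charging argument by coordinates. Classify columns as \emph{private} (support a single coordinate $\inset{i}$, say $p_i$ of them) or \emph{heavy} (support of size between $2$ and $r$; supports of size $> r$ lie in no $\mathrm{Col}(I)$ and may be assumed absent). Repetition assigns $d+1$ private columns per coordinate; the goal is to show that replacing these by heavy columns shared across coordinates carries an overhead that caps the total savings at the stated correction.

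The key step is a per-coordinate obstruction. Fix $i$. For any $I \ni i$, recovering $e_i$ after $d$ erasures requires that among the columns of $\mathrm{Col}(I)$ containing $i$ in their support, at least $d+1$ survive every $d$-erasure—otherwise all survivors vanish in row $i$ and cannot reproduce $e_i$. Crucially the adversary also chooses $I$: by selecting $I\setminus\inset{i}$ to avoid the residual supports of heavy columns through $i$, it can exclude up to $k-r$ of them from $\mathrm{Col}(I)$. This is the origin of the $k-r$ overhead: to push $p_i$ below $d+1$ using heavy columns, coordinate $i$ must first carry roughly $k-r$ extra heavy columns before any savings accrue. I would make this precise as an inequality of the form $p_i + (h_i - (k-r))_+ \ge d+1$, where $h_i$ counts heavy columns through $i$, and then sum over $i$. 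Since each heavy column is counted in at least two of the $h_i$, balancing the shared savings (at most $d$ per coordinate, over the $r-1$ coordinates that can be jointly served, giving the $d(r-1)$ term) against the quadratic overhead $\tfrac12(k-r)^2$ is what should produce the correction $\max\inset{0, d(r-1) - \tfrac12(k-r)^2}$.

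The main obstacle I anticipate is making this per-coordinate obstruction both correct and tight over an arbitrary finite field. The naive ``surviving columns'' statement must be upgraded to a genuine rank/minimum-distance statement—the surviving heavy columns through $i$ must not merely be present but must jointly place $e_i$ in their span robustly—and the savings at different coordinates interact because a single heavy column is shared among all coordinates in its support. Controlling this interaction, namely showing that the shared savings cannot exceed $d(r-1)$ while the unavoidable overhead is at least $\tfrac12(k-r)^2$, is where the real work lies, and likely requires a careful double-counting or an extremal/LP-duality argument over the support structure rather than the loose averaging above. The two branches of the $\max$ then correspond to whether the overhead already swamps the possible savings (repetition optimal, bound $k(d+1)$) or not (the MDS-flavored regime).
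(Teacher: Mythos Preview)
Your structural translation (the queried columns for $I$ must have support contained in $I$, so $G|_{I,\mathrm{Col}(I)}$ generates a code of dimension $r$ and distance $\ge d+1$) is exactly the content of the paper's Lemma~\ref{lem:1}, and your per-coordinate obstruction is precisely the Claim inside Lemma~\ref{lem:2}: if coordinate $i$ has at most $d$ private columns, then taking $D$ to cover them and choosing $I\setminus\{i\}$ to avoid the residual supports forces $|S_i|\ge (d+1)+(k-r)$, which is your inequality $p_i+(h_i-(k-r))_+\ge d+1$. So the local ingredient is correct and matches the paper.

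The genuine gap is the aggregation, which you flag as ``where the real work lies'' but do not carry out. Summing your per-coordinate inequality over all $i$ produces an overhead term of order $k(k-r)$, not $\tfrac12(k-r)^2$, and your appeal to ``shared savings at most $d(r-1)$ versus quadratic overhead $\tfrac12(k-r)^2$'' is reading the desired terms off the theorem statement rather than deriving them; a static double-count over a fixed $G$ has no visible source for a triangular number. The paper does \emph{not} sum the local bounds. Instead it iterates: the missing idea is that if you delete row $i$ \emph{together with every column in $S_i$}, what remains is still the generator matrix of an $(r,r,d)$-\codenamens, now with message length $k-1$ and block length at most $n-(d+1)-(k-r)$ (Lemma~\ref{lem:2}). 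Applying this $k-r$ times, the overhead at step $\ell$ is only $(k-\ell)-r$, so the accumulated overhead is $\sum_{\ell=0}^{k-r-1}(k-r-\ell)=\binom{k-r+1}{2}\ge\tfrac12(k-r)^2$; after $k-r$ steps one is left with an $(r,r,d)$-code on $r$ message symbols, to which the Singleton bound applies directly and yields the $d(r-1)$ term. Your global charging framework has no mechanism for this shrinking overhead, which is why the $\tfrac12(k-r)^2$ remains unexplained in your outline.
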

Theorem~\ref{thm:main} implies that $n \geq k(d+1)$ (that is, the repetition code is optimal) whenever 
\begin{equation}\label{eq:condition}
r \leq k + d - \sqrt{(k + d)^2 - k^2}.
\end{equation}
Otherwise, we still have
\begin{equation}\label{eq:bound}
n \geq k(d + 1) - d(r - 1) + \frac{1}{2}(k - r)^2.
\end{equation}
Notice that when $r=k$, \eqref{eq:bound} reads $n \geq k + d$, which is precisely the Singleton bound.  Thus Theorem~\ref{thm:main} is tight when \eqref{eq:condition} holds (attained by a repetition code) and when $r=k$ (attained by Reed-Solomon codes).

Before we prove Theorem~\ref{thm:main}, we specialize it to a few different parameter regimes, to demonstrate that for very natural parameter regimes, for most values of $r$, one cannot do substantially better than repetition (or in some cases any better at all).
That is, when $r=m$, the answer to Question~\ref{q:main} is ``quite large."

To build intuition, Figure~\ref{fig:plt} shows the lower bound given in Theorem~\ref{thm:main} for $k=100$ for various values of $d$. Notice that it makes sense to take $d \ll k$ (this is the case in the LRC set-up) or $d \gg k$ (this is the case in the constant-query LDC set-up).  
The take-away from Figure~\ref{fig:plt} should be that the flat initial segment---where the repetition code is optimal---extends until $r$ is reasonably large.

\begin{figure}
\begin{center}
\includegraphics[width=8cm]{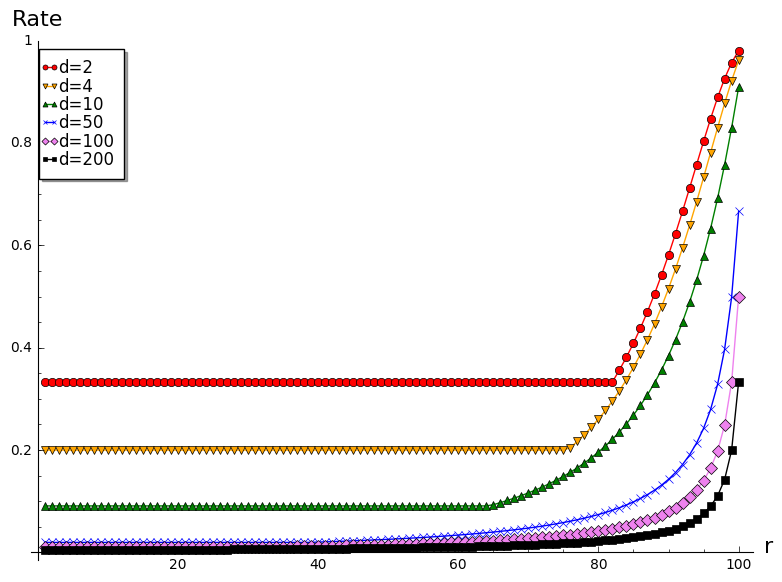}
\caption{The upper bound given by Theorem~\ref{thm:main} on the rate $k/n$ of a $(r,r,d)$-\codename when $k = 100$ and $d = 2,4,10,50,100,200$.  The upper bound is attained at the flat initial segments (by the repetition code) and at $r=k$ (by an MDS code).} 
\label{fig:plt}
\end{center}
\end{figure}

Next, we consider the setting is where $d = r$.  This is the case in the traditional (systematic) LRC set-up, where the $d$ erasures are just the $r$ symbols that need to be repaired.  In this case, \eqref{eq:condition} reads $r \leq (\sqrt{2} - 1)k \approx .414\cdot k$. Thus, in this setting, $r$ must be $\Omega(k)$ before we can
do any better than the repetition code.

Finally we consider the setting, inspired by the traditional LDC set-up, when there are $d = \delta n$ erasures in the codeword.  
In this case, when $r = k$ we can have rate $k/n = 1 - \delta$ using an MDS code, while the repetition code gives vanishing rate $k/n = \Theta(1/n)$.
Theorem~\ref{thm:main} implies that even for $r = (1 - \eps)k$, for any constant $\eps > 0$, we still have $k/n = O(1/n)$.
Indeed, in this setting Theorem~\ref{thm:main} says that
\[ n \geq k(d+1) - d(r-1) = k(d+1) - d((1-\eps)k - 1) \geq \eps k (d+1), \]
and setting $d = \delta n$ the above implies that the rate satisfies $k/n = O(1/n)$.
This reasoning holds even when $\eps = o(1)$.  For example, even if $r = \inparen{1 - \frac{\log(n)}{n}}\cdot k$, Theorem~\ref{thm:main} still implies that $k/n = o(1)$, while if $r = k$ an MDS code has $k/n = 1 - \delta$.

\section{Proof of Theorem~\ref{thm:main}}\label{sec:proof}
Consider a linear $(r,r,d)$-\codename $\cC$ with generator matrix $G$.  
We begin with a lemma which shows that any $(r,r,d)$-\codename must have an extremely structured generator matrix.
\begin{lemma}\label{lem:1} For any $I$ and $J$ as in  Definition~\ref{def:main}, $G|_{I, J}$ is full rank and $G|_{[k]\setminus I, J} = 0$.
\end{lemma}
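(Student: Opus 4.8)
The plan is to translate the decoding guarantee of Definition~\ref{def:main} into a statement about the left kernel of the column-submatrix $G_J := G|_{[k],J}$, and then to pin everything down with a single dimension count. First I would unpack what it means for $\cC(x)|_J$ to determine $x|_I$. Since $\cC$ is linear, $\cC(x)|_J = x G_J$ (viewing $x$ as a row vector), and $x \mapsto x|_I$ is also linear, so determination is equivalent to the implication $x G_J = x' G_J \Rightarrow x|_I = x'|_I$. Setting $z = x - x'$ and using linearity, this says exactly that every $z$ in the left kernel $N := \inset{z \in \F^k : z G_J = 0}$ satisfies $z|_I = 0$; that is, $N \subseteq W$, where $W := \inset{z \in \F^k : z|_I = 0}$ is the coordinate subspace of dimension $k - r$.

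Next comes the dimension count, which is the substantive step. Because $J$ has at most $m = r$ elements, $\rank(G_J) \le r$, so $\dim N = k - \rank(G_J) \ge k - r = \dim W$. Combined with the containment $N \subseteq W$ from the previous step, this forces $N = W$ and $\rank(G_J) = r$; in particular $|J| = r$, so $G|_{I,J}$ is a genuine $r \times r$ square matrix.

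Finally I would read off both conclusions from $N = W$. For the zero block: every standard basis vector $e_i$ with $i \in [k]\setminus I$ lies in $W = N$, and $e_i G_J$ is precisely the $i$-th row of $G_J$, so that row vanishes, which is exactly $G|_{[k]\setminus I, J} = 0$. For full rank: since the only possibly-nonzero rows of $G_J$ are those indexed by $I$, we have $\rank(G|_{I,J}) = \rank(G_J) = r$, so the $r \times r$ matrix $G|_{I,J}$ is full rank.

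I do not expect a serious obstacle here: once the reformulation is in place the argument is forced. The one point worth stating carefully is the equivalence between the set-theoretic notion of ``determination'' in Definition~\ref{def:main} and the algebraic containment $N \subseteq W$ (both directions follow from linearity of $x \mapsto x G_J$ and $x \mapsto x|_I$). It is also worth remarking that the erasure set $D$ plays no role in the proof beyond guaranteeing, via Definition~\ref{def:main}, that a valid $J$ exists at all.
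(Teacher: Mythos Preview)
Your proof is correct and takes essentially the same route as the paper's. Both arguments pin down the left kernel $N$ of $G_J$ as exactly the coordinate subspace $W=\{z:z|_I=0\}$ by combining the containment $N\subseteq W$ (which is the determination hypothesis) with the size constraint $|J|\le r$; the paper encodes that size constraint as a pigeonhole/bijection argument on the recovery map $f_{I,J}:\F^r\to\F^r$, while you encode it as the rank--nullity inequality $\dim N\ge k-r=\dim W$, but these are the same idea in slightly different clothing.
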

\begin{proof}
In order for  $x|_{I}$ to be recoverable from $\mathcal{C}(x)|_{J}$ for all $x \in \F^k$, there must exist an injective function $f_{I, J}:\F^r \to \F^r$ such that $f_{I, J}(\mathcal{C}(x)|_{J}) = x|_{I}$ for all $x\in \F^k$. Since $f_{I, J}$ is injective and its domain and codomain are the same size, $f_{I, J}$ is bijective.

It follows that if $x|_I = y|_I$, then $ \mathcal{C}(x)|_{J} =  \mathcal{C}(y)|_{J}$.  Using the definition of $G$, this means that for all $x,y$ with $x|_I = y|_I$, we have
$xG|_{[k], J} = yG|_{[k], J}$ and hence $(x - y)G|_{[k], J} = 0.$
Since $(x - y)|_I = 0$, this implies that for all $x,y \in \F^k$,
$$(x - y)|_{[k]\setminus I}G|_{[k]\setminus I, J} = 0.$$
However, since $(x - y)|_{[k]\setminus I}$ can be any element of $\F^{k - r}$, it must be the case that $G|_{[k]\setminus I, J} = 0.$

Thus, $\cC(x)|_J = (xG)|_J = x|_I G|_{I,J}$.  Since $f_{I,J}$ is invertible and $f_{I, J}(\cC(x)|_J) = f_{I, J}(x|_I G|_{I,J}) = x|_{I}$, $G|_{I,J}$ must be invertible as well.
\end{proof}

Next, we use Lemma~\ref{lem:1} to ``improve" any $(r,r,d)$-\codenamens.  We will repeatedly apply Lemma~\ref{lem:2} until we arrive at a code with good distance and rate, and then apply the Singleton bound to obtain our final bound on $n$.
\begin{lemma}\label{lem:2}
Suppose $\mathcal{C}:\F^k \to \F^n$ is a linear $(r,r,d)$-\codenamens, such that $n < k(d + 1)$.
Then there exists a linear $(r,r,d)$-\codename $\cC':\F^{k-1} \to \F^{n - (d+1) - (k-r)}$.  
\end{lemma}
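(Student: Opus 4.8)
The plan is to construct $\cC'$ by deleting a single message coordinate (a row of $G$) together with exactly $(d+1)+(k-r)$ codeword coordinates (columns of $G$), chosen so that the rigid structure furnished by Lemma~\ref{lem:1} survives. After relabeling, suppose the coordinate to be removed is $k$. For a column $j$ let $\supp(j)\subseteq[k]$ denote the set of rows on which $G$ is nonzero, and let $N_k=\inset{j : k\in\supp(j)}$ be the columns that ``see'' coordinate $k$. The key point is that, by Lemma~\ref{lem:1}, any request $I\subseteq[k]\setminus\inset{k}$ is served by a set $J$ of columns with $\supp(j)\subseteq I$; every such $j$ then has a zero entry in row $k$ and lies outside $N_k$. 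Hence if I form $\cC'$ by erasing row $k$ and deleting $(d+1)+(k-r)$ columns drawn from $N_k$, none of these deletions touches the recovery sets used for the requests that avoid coordinate $k$. Those recoveries (and the resulting injectivity on $\F^{k-1}$) then certify that $\cC'$ is a linear $(r,r,d)$-\codename on the remaining $k-1$ coordinates, with block length exactly $n-(d+1)-(k-r)$.

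It remains to exhibit a coordinate $k$ with $\inabs{N_k}\ge (d+1)+(k-r)$, and this is where the hypothesis $n<k(d+1)$ is used. Let $P_i$ count the ``private'' columns with $\supp(j)=\inset{i}$. Distinct coordinates own disjoint private columns, so $\sum_i P_i\le n<k(d+1)$, and by averaging some coordinate $k$ has $P_k\le d$. For this $k$, Lemma~\ref{lem:1} together with tolerance to $d$ erasures forces, for every size-$r$ request $I\ni k$, at least $d+1$ columns of $N_k$ supported inside $I$; writing $S=I\setminus\inset{k}$, this says every $(r-1)$-subset $S$ of $[k]\setminus\inset{k}$ contains at least $d+1-P_k\ge 1$ of the non-private neighbor-supports $\inset{\supp(j)\setminus\inset{k} : j\in N_k,\ \supp(j)\neq\inset{k}}$. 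Consequently no $(k-r)$-subset $W$ can meet all these neighbor-supports, since otherwise $S=[k]\setminus(\inset{k}\cup W)$ would contain none. Thus their minimum transversal exceeds $k-r$, which forces at least $k-r+1$ distinct non-private columns through $k$, and combining this with the private columns and the $d+1$ redundancy requirement pushes $\inabs{N_k}$ up to $(d+1)+(k-r)$.

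The main obstacle is exactly this counting step: converting the $d$-erasure robustness of Lemma~\ref{lem:1} into the sharp bound $\inabs{N_k}\ge(d+1)+(k-r)$ uniformly over the possible sizes of $P_k$. The clean case is $P_k=d$, where the $\ge k-r+1$ distinct non-private covering columns add precisely the missing $(k-r)$ on top of the $d+1$ redundancy; the delicate work is the regime of small $P_k$, where one must argue that the multiplicity-covering condition on the neighbor-supports simultaneously supplies the $d+1$-fold redundancy \emph{and} leaves a transversal deficit of size $k-r$, rather than letting the two demands overlap. Once such a $k$ is secured, deleting row $k$ and any $(d+1)+(k-r)$ columns of $N_k$ produces the desired $\cC':\F^{k-1}\to\F^{\,n-(d+1)-(k-r)}$, completing the reduction.
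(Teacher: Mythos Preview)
Your overall strategy matches the paper's exactly: pick a coordinate $i$ with at most $d$ ``private'' columns (your $P_k \le d$, the paper's $|T_i|\le d$), show that the set $N_k=S_i$ of columns supported on $i$ has size at least $(d+1)+(k-r)$, and then delete row $i$ together with the appropriate columns. Your verification that the resulting $\cC'$ is still an $(r,r,d)$-\codename via Lemma~\ref{lem:1} is correct, and your choice to delete exactly $(d+1)+(k-r)$ columns from $N_k$ rather than all of $S_i$ is a harmless variant of the paper's ``delete all of $S_i$ and pad with zeros.''

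The gap you flag in the case $P_k<d$ is real, and your transversal argument as written only yields $|N_k|\ge P_k+(k-r+1)$, which falls short. The paper's resolution is simple and fits perfectly into your framework: instead of reasoning abstractly about the covering condition with multiplicity $d+1-P_k$, just \emph{fold the slack into the erasure set}. Choose $D$ with $T_i\subseteq D\subseteq S_i$ and $|D|=d$ (possible since $|T_i|\le d\le |S_i|$); this absorbs all $P_k$ private columns and an additional $d-P_k$ non-private ones. Now every $j\in S_i\setminus D$ is non-private, so pick a witness $i_j\ne i$ with $G_{i_j,j}\ne 0$ and set $X=\{i_j\}$. If $|[k]\setminus X|\ge r$, take any request $I\ni i$ with $I\cap X=\emptyset$; its recovery set $J$ for this $D$ must contain some $j\in S_i\setminus D$, but Lemma~\ref{lem:1} forces $G_{i_j,j}=0$, a contradiction. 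Hence $|X|\ge k-r+1$, so $|S_i\setminus D|\ge k-r+1$ and $|S_i|\ge d+(k-r+1)$. In your language, this is the observation that after throwing away any $d-P_k$ non-private columns, every $(r-1)$-set $S$ still contains at least one remaining neighbor-support, so your transversal bound of $k-r+1$ applies to what is left; adding back the $d-P_k$ discarded columns and the $P_k$ private ones gives exactly $(d+1)+(k-r)$.
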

\begin{proof}
Let $\cC$ be as in the statement of the lemma, with generator matrix $G \in \F^{k \times n}$.
Then there must exist some $i \in [k]$ such that multiples of $e_i$ appear at most a total of $d$ times among the columns of $G$; otherwise, we have $n \geq k(d+1)$, which we assumed was not the case.
Let $T_i \subset [n]$ be the set of indices $j$ so that $G|_{[k],\{j\}} = ge_i$ for some $g \in \F$; so the above reasoning implies that $|T_i| \leq d$.
Let $S_i \subset [n]$ be the set of indices $j$ so that $G_{i,j} \neq 0$.
Notice that $|S_i| \geq d + 1$; otherwise, an adversary could choose a set $D$ of size $d$ so that $S_i \subseteq D$, and then we would never be able to recover $x_i$.
\begin{claim}
$|S_i| \geq d + 1 + k - r.$
\end{claim}
\begin{proof}
Choose $D \subseteq [n]$ so that $|D| = d$ and so that $T_i \subseteq D \subseteq S_i$.
As in Definition~\ref{def:main}, $D$ represents the adversarial set of $d$ erasures.   
We'll show that $|S_i \setminus D| \geq k - r + 1$, which implies the claim. 

Construct a set $X \subseteq [k]\setminus\{i\}$ as follows. For each column $j \in S_i \setminus D$, arbitrarily choose some row $i_j \neq i$ such that $G_{i_j, j} \neq 0$. Let $X = \{i_j \mid j \in S_i \setminus D\}$. Note that we allow for indices $_j$ to be repeated, so we have $|X| \leq |S_i \setminus D|$.
We now claim that 
$$|[k]\setminus X| \leq r - 1.$$
Suppose for the sake of contradiction that $|[k]\setminus X| \geq r$. Then choose some set $I$ such that $I \subseteq [k]\setminus X$, $|I| = r$ and $i \in I$. Since $G$ is the generator matrix of a $(r,r,d)$-\codenamens, there exists $J \subseteq [n]\setminus D$ of size $r$ such that $(xG)|_J$ determines $x|_I$. Since $i \in I$, it follows that there exists $j \in J$ such that $j \in S_i \setminus D$. Furthermore, by Lemma~\ref{lem:1} we have 
$$ G|_{[k] \setminus I, \{j\}} = 0$$
but by construction, $X\subseteq [k] \setminus I$.  This implies that $G|_{X,\{j\}} = 0$ which implies that $G|_{i_j,\{j\}} = 0$.  This is a contradiction of the choice of $i_j$.
Hence, 
$$|[k]\setminus X| \leq r - 1 \implies |X| \geq k - r + 1 \implies |S_i \setminus D| \geq k - r + 1$$
and this proves the claim. 
\end{proof}
Given the claim, we can prove the lemma.  Indeed, 
$G|_{[k]\setminus \{i\}, [n]\setminus S_i}$ is the generator matrix of a $(r,r,d)$-\codename $\cC':\F^{k-1} \to \F^{n - |S_i|}$, and the claim implies that $n - |S_i| \leq n - (d+1) - (k-r)$.  If $n - |S_i| < n - (d+1) - (k-r)$ is strictly smaller, then we can always artificially increase $n$ by padding the codeword with zeros, so this proves the lemma.
\end{proof}

Now by repeatedly applying Lemma~\ref{lem:2}, we will prove Theorem~\ref{thm:main}.
Suppose there exists a $(r,r,d)$-\codename $\cC:\F^k \to \F^n$ with $n < k(d+1)$.

For $\lambda = 0,1,2,\ldots,$ define $n_\lambda$ recursively by $n_0 = n$ and 
\begin{equation}\label{eq:recdef}
n_{\lambda + 1} = n_{\lambda} - (d + 1) - (k  - r - \lambda)
\end{equation} 
for $\lambda \geq 0$. Then
\begin{align*}
n_\lambda &= n - \lambda(d + 1) - \sum_{\ell = 0}^{\lambda - 1} (k - r - \ell) \\
&= n - \lambda(d+1) - \lambda\inparen{ k - r - \frac{\lambda - 1}{2}}.
\end{align*}
Note that by \eqref{eq:recdef}, for $\lambda \leq k - r$, we have that $n_{\lambda} - n_{\lambda + 1} \geq d + 1$. Thus, since we have by assumption that $n = n_0 < k(d+1)$, it holds for all $\lambda$ up to $\lambda = k -r$ that $n_{\lambda} < (k-\lambda)(d+1)$. 

Hence, for any $\lambda \leq k - r$, we can apply Lemma~\ref{lem:2} $\lambda$ times to obtain a code
$\cC'_\lambda$, which is an $(r,r,d)$-\codename with block length $n_\lambda$ and message length $k - \lambda$.

Now consider $\lambda = k - r$. Then $\cC'_\lambda$ is an $(r,r,d)$-\codename with distance $d$ and dimension $k - \lambda = r$.
Then
the Singleton bound applied to $\cC'_\lambda$ implies that $n_{k-r} \geq r + d$, and so 
\begin{align*}
n &\geq (k - r)(d + 1) + \frac{(k - r)(k - r + 1)}{2} + r + d\\
&= k(d + 1) - d(r - 1) + \frac{(k - r)(k - r + 1)}{2}\\
&\geq  k(d + 1) - d(r - 1) + \frac{1}{2}(k - r)^2,
\end{align*}
which is \eqref{eq:bound}.

Since the requirement on $\cC$ was that $n < k(d+1)$, we conclude that for any $(r,r,d)$-\codename $\cC:\F^k \to \F^n$, we have
\[ n \geq \min \{ k(d+1), k(d+1) - d(r-1) + \frac{1}{2} (k-r)^2 \}. \]
Indeed, if $n < k(d+1)$ then we have shown above the \eqref{eq:bound} holds.  And on the other hand if that does not hold then obviously $n \geq k(d+1)$.
The theorem follows.

\section{Conclusion}  
Our goal in this work was to begin a systematic investigation of Question~\ref{q:main}: if we wish to recover $r$ symbols in a message, at what point does simply appealing repeatedly to the $r=1$ case become wasteful?  Clearly one cannot do better for $r=1$, and clearly one can do much better for $r=k$.  When does this transition occur?  At what point can we begin to take advantage of multiple requests? 

In order to investigate this question we introduced \codenamens{s}.
We have shown that, in the special case where $r = m$, $(r,r,d)$-\codenamens{s} that are better than the repetition code do not exist until $r$ is quite large, $\Omega(k)$ in several natural parameter regimes.  This is perhaps surprising, since when $r = k$, one can easily do \em much \em better than repetition using an MDS code.  Thus, in this case, the answer to Question~\ref{q:main} is pessimistic: one cannot handle $r$ requests more efficiently than repeating a single request $r$ times until $r$ is quite large.

The natural question left open by our work is whether or not these same impossibility results hold when $m > r$, and it is our hope that the techniques developed here may help shed light on this question.  When $m = \lambda r$ for an integer $\lambda \geq 1$, one benchmark is given by $n = k\left( d/\lambda + 1 \right),$ which is achieved by breaking the message up into $k/\lambda$ blocks of size $\lambda$, and encoding each with a Reed-Solomon code which can tolerate $d$ errors.  (Notice that when $\lambda = 1$ this is just the repetition code, so for $\lambda \geq 1$ this generalizes the question studied in the current paper).  The open question is thus: for $\lambda > 1$, how large $r$ must be before we can obtain a better rate than $1/(d/\lambda + 1)$?

\IEEEtriggeratref{3} 
%  wrap before the third reference
\bibliographystyle{plain}
\bibliography{lr}

\end{document}